\date{April 1, 2009 and, in revised form, December 30, 2010}
\newcommand*{\mailto}[1]{\href{mailto:#1}{\nolinkurl{#1}}}
\newtheorem{theorem}{Theorem}[section]
\newtheorem{corollary}[theorem]{Corollary}
\newcommand{\R}{\mathbb{R}}
\newcommand{\N}{\mathbb{N}}
\newcommand{\Z}{\mathbb{Z}}
\newcommand{\C}{\mathbb{C}}
\newcommand{\nn}{\nonumber}
\newcommand{\be}{\begin{equation}}
\newcommand{\ee}{\end{equation}}
\newcommand{\bea}{\begin{eqnarray}}
\newcommand{\eea}{\end{eqnarray}}
\newcommand{\ul}{\underline}
\newcommand{\ol}{\overline}
\newcommand{\ti}{\tilde}
\newcommand{\spr}[2]{\langle #1 , #2 \rangle}
\newcommand{\I}{\mathrm{i}}
\newcommand{\tl}{\mathrm{TL}}
\newcommand{\AL}{\mathrm{AL}}
\newcommand{\km}{\mathrm{KM}}
\newcommand{\deven}{\delta_{\rm even}}
\newcommand{\dodd}{\delta_{\rm odd}}
\numberwithin{equation}{section}
\begin{document}

\title{Unique continuation\\ for discrete nonlinear wave equations}

\author[H. Kr\"uger]{Helge Kr\"uger}
\address{Department of Mathematics\\ Rice University\\ Houston\\ TX 77005\\ USA}
\curraddr{Department of Mathematics\\ Caltech\\ Pasadena\\CA 91125\\USA}
\email{\mailto{helge@caltech.edu}}
\urladdr{\url{http://www.its.caltech.edu/~helge/}}

\author[G. Teschl]{Gerald Teschl}
\address{Faculty of Mathematics\\ University of Vienna\\
Nordbergstrasse 15\\ 1090 Wien\\ Austria\\ and International Erwin Schr\"odinger
Institute for Mathematical Physics\\ Boltzmanngasse 9\\ 1090 Wien\\ Austria}
\email{\mailto{Gerald.Teschl@univie.ac.at}}
\urladdr{\url{http://www.mat.univie.ac.at/~gerald/}}

\thanks{Research supported by the Austrian Science Fund (FWF) under Grant No.\ Y330 and
the National Science Foundation (NSF) under Grant No.\ DMS--0800100.}

\keywords{Unique continuation, Toda lattice, Kac--van Moerbeke lattice, Ablowitz--Ladik equations,
discrete nonlinear Schr\"odinger equation, Schur flow}
\subjclass[2000]{Primary 35L05, 37K60; Secondary 37K15, 37K10}

\begin{abstract}
We establish unique continuation for various discrete nonlinear wave equations.
For example, we show that if two solutions of the Toda lattice coincide for one lattice point in some
arbitrarily small time interval, then they coincide everywhere. Moreover, we establish analogous results
for the Toda, Kac--van Moerbeke, and Ablowitz--Ladik hierarchies. Although all these equations
are integrable, the proof does not use integrability and can be adapted to other equations as well.
\end{abstract}

\maketitle

\section{Introduction}

Unique continuation results for wave equations have a long tradition and seem to originate in control theory.
One of the first results seems to be the one by Zhang \cite{zh}, where he proves that if a short-range solution
of the Korteweg--de Vries (KdV) equation vanishes on an open subset in the $x/t$-plane, then it must vanish everywhere.
Since then, this result has been extended in various directions and for different equations (see for example \cite{bu},
the introduction in \cite{ik} for the case of the nonlinear Schr\"odinger equation, \cite{ekpv}, \cite{kpv}, \cite{kpv2} for the
generalized KdV equation, \cite{le} for the Camassa--Holm equation).

However, all the results so far seem to only deal with wave equations which are continuous in the spatial direction and
this clearly raises the question for such unique continuation results for wave equations which are discrete in the spatial
variable. In particular, to the best of our knowledge, there are no results for example for the Toda equation,
one of the most prominent discrete systems. While in principle the strategy from Zhang \cite{zh} would be applicable
to the Toda lattice, it is the purpose of this paper to advocate a much simpler direct approach in the
discrete case. We will start with the Toda lattice as our prototypical example and then show how the entire Toda
hierarchy as well as the  Kac--van Moerbeke and Ablowitz--Ladik hierarchies can be treated. It is important to
stress that our approach does not use integrability of these equations and hence can be adapted to more
general systems. On the other hand, our approach is restricted to one dimension in the spatial variable and
thus does not apply to the discrete Schr\"odinger equation on $\Z^d$. Due to the connections with
localization for discrete Anderson--Bernoulli models, unique continuation for this
model is an important open problem; see \cite{bu2}, \cite{buk}.

\section{The Toda lattice}

In this section we want to treat the Toda lattice as the prototypical example.
To this end, recall the Toda lattice \cite{ta} (in Flaschka's variables \cite{fl1})
\begin{align} \nn
\dot{a}(n,t) &= a(n,t) \Big(b(n+1,t)-b(n,t)\Big), \\ \label{todeqfl}
\dot{b}(n,t) &= 2 \Big(a(n,t)^2-a(n-1,t)^2\Big), \qquad n\in \Z,
\end{align}
where the dot denotes a derivative with respect to $t$.
It is a well-studied physical model and one of the prototypical discrete integrable wave equations.
We refer to the monographs \cite{fad}, \cite{tjac}, \cite{ta} or the review articles \cite{krt}, \cite{taet}
for further information.

\begin{theorem}\label{thmToda}
Assume that $a_0(n,t), b_0(n,t)$ and $a(n,t), b(n,t)$ are complex-valued solutions
of the Toda lattice \eqref{todeqfl} with $a_0(n,t) \ne 0$ for all $(n,t) \in \Z\times\R$ such that there is one $n_0 \in \Z$
and two times $t_0<t_1$ such that
\be\label{eq:asN}
a_0(n_0,t)^2 = a(n_0,t)^2,\quad
b_0(n_0,t) = b(n_0,t),
\ee
for $t\in (t_0,t_1)$. Then
\be
a_0(n,t)^2 = a(n,t)^2,\quad b_0(n,t) = b(n,t)
\ee
for all $(n,t) \in \Z\times\R$.
\end{theorem}

\begin{proof}
It suffices to prove that \eqref{eq:asN} for $n_0$ implies
\eqref{eq:asN} for $n_0 - 1$ and $n_0 + 1$. We start with $N_0-1$ and first observe that \eqref{todeqfl} implies that
\begin{align*}
0 &= \dot b(n_0,t) - \dot b_0(n_0,t) = 2 \big(a(n_0,t)^2 - a_0(n_0,t)^2 - a(n_0-1,t)^2 + a_0(n_0-1,t)^2\big)\\
&= -2 \big(a(n_0-1,t)^2 - a_0(n_0-1,t)^2\big)
\end{align*}
and thus $a(n_0-1,t)^2 = a_0(n_0-1,t)^2$. Using this we compute
\begin{align*}
0 &=\frac{\dot a(n_0-1,t)}{a(n_0-1,t)} - \frac{\dot a_0(n_0-1,t)}{a_0(n_0-1,t)}\\
&= b(n_0,t) - b_0(n_0,t) - b(n_0-1,t) + b_0(n_0-1,t)\\
&= - b(n_0-1,t) + b_0(n_0-1,t),
\end{align*}
so $b(n_0-1,t) = b_0(n_0-1,t)$. Now for $n_0 + 1$, we begin with
\begin{align*}
0 &= \frac{\dot a(n_0,t)}{a(n_0,t)} - \frac{\dot a_0(n_0,t)}{a_0(n_0,t)} = b(n_0+1,t) - b_0(n_0+1,t) - b(n_0,t) + b_0(n_0,t)\\
&= b(n_0+1,t) - b_0(n_0+1,t),
\end{align*}
so $b(n_0+1,t) = b_0(n_0+1,t)$. Now, use that
\begin{align*}
0 &= \dot b(n_0+1,t) - \dot b_0(n_0+1,t)\\
&= 2 \big(a(n_0+1,t)^2 - a_0(n_0+1,t)^2 - a(n_0,t)^2 + a_0(n_0,t)^2\big)\\
&=2 \big(a(n_0+1,t)^2 - a_0(n_0+1,t)^2\big)
\end{align*}
to conclude that $a(n_0+1,t)^2 = a_0(n_0+1,t)^2$. This finishes the proof.
\end{proof}

It is worthwhile to note that the assumption $a_0(n,t)\ne 0$ is crucial. In fact, if $a_0(n_0,t)=0$ for one
(and hence for all) $t\in\R$, then the Toda lattice decouples into two independent
parts to the left and right of $n_0$, and the above result is clearly wrong. However, it remains valid
on every consecutive number of points for which $a_0(n,t)\ne 0$ holds true. In particular,
our result applies to the half-line Toda lattice or to the finite Toda lattice.

As a simple consequence, this also proves that the propagation speed for the Toda lattice is finite.

\begin{corollary}
Let $a(n,t)\ne 0$, $b(n,t)$ be a complex-valued solution of the Toda lattice \eqref{todeqfl} for which $a(n,t_0)-\frac{1}{2}$, $b(n,t_0)$
is supported on a finite number of points $n$ at some initial time $t_0$. Then this does not remain true for $t\in(t_0,t_1)$
unless $a(n,t)=\frac{1}{2}$, $b(n,t)=0$ for all $(n,t) \in \Z\times\R$.
\end{corollary}

In fact, in the case of real-valued solutions, one can even show the somewhat stronger result that
$a(n,t_0)-\frac{1}{2}$, $b(n,t_0)$ can be compactly supported for at most one time \cite{ttd}. However,
on the other hand, the Toda lattice does preserve certain asymptotic properties of the initial conditions;
see again \cite{ttd}.

\section{Extension to the Toda and Kac--van Moerbeke hierarchy}
\label{secth}

In this section we show that our main result extends to the entire Toda hierarchy (which will
cover the Kac--van Moerbeke hierarchy as well).
To this end, we introduce the Toda hierarchy using the standard Lax formalism
following \cite{bght} (see also \cite{ghmt}, \cite{tjac}).

Associated with two sequences $a(t)^2\ne 0, b(t)$ is a Jacobi operator
\begin{equation} \label{defjac}
H(t) = a(t) S^+  + a^-(t) S^-  + b(t)
\end{equation}
acting on sequences over $\Z$, where $S^\pm f(n) = f^\pm(n)= f(n\pm1)$ are the usual shift operators.
Moreover, choose constants $c_0=1$, $c_j$, $1\le j \le r$, $c_{r+1}=0$, and set
\be
P_{2r+2}(t) = \sum_{j=0}^r c_{r-j} \ti{P}_{2j+2}(t), \qquad
\ti{P}_{2j+2}(t) = [H(t)^{j+1}]_+ - [H(t)^{j+1}]_-,
\ee
where $[A]_\pm$ denote the upper and lower triangular parts of an operator with respect to the
standard basis $\delta_m(n)= \delta_{m,n}$ (with $\delta_{m,n}$ the usual Kronecker delta).
Then the Toda hierarchy is equivalent to the Lax equation
\begin{equation}
\frac{d}{dt} H(t) -[P_{2r+2}(t), H(t)]=0, \qquad t\in\R,
\end{equation}
where $[A,B]= A B - B A$ is the usual commutator. Abbreviating
\begin{align}\nn
g_j(n,t) &= \sum_{\ell=0}^j c_{j-\ell} \ti{g}_\ell(n,t), \quad \ti{g}_\ell(n,t)=\spr{\delta_n}{H(t)^\ell \delta_n},\\
h_j(n,t) &= \sum_{\ell=0}^j c_{j-\ell}  \ti{h}_\ell(n,t) + c_{j+1}, \quad \ti{h}_\ell(n,t) = 2 a(n,t) \spr{\delta_{n+1}}{H(t)^\ell\delta_n},
\end{align}
one explicitly obtains
\begin{equation}\label{tlrabo}
\tl_r (a(t), b(t)) = \begin{pmatrix} \dot{a}(t) -a(t) \Big(g_{r+1}^+(t) -
g_{r+1}(t) \Big)\\ 
\dot{b}(t) - \Big(h_{r+1}(t) -h_{r+1}^-(t) \Big) \end{pmatrix} =0, \qquad r\in\N_0,
\end{equation}
for the $r$-th equation $\tl_r(a,b) =0$ in the Toda hierarchy (where $\N_0=\N\cup\{0\}$).

Our main point in this section is the following generalization of Theorem~\ref{thmToda} to the entire Toda hierarchy:

\begin{theorem}\label{thmtlh}
Assume that $a_0(n,t)\ne 0, b_0(n,t)$ and $a(n,t), b(n,t)$ are complex-valued solutions
of some equation in the Toda hierarchy $\tl_r$ such that there is one $n_0 \in \Z$ and two times $t_0<t_1$
such that
\be
a_0(n_0+j,t)^2 = a(n_0+j,t)^2,\quad
b_0(n_0+j,t) = b(n_0+j,t), \qquad j=0,\dots, r,
\ee
for $t\in (t_0,t_1)$. Then
\be\label{a0b0eqab}
a_0(n,t)^2 = a(n,t)^2,\quad b_0(n,t) = b(n,t)
\ee
for all $(n,t) \in \Z\times\R$.
\end{theorem}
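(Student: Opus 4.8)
The plan is to reproduce the inductive mechanism behind Theorem~\ref{thmToda}, now propagating a window of $r+1$ consecutive sites one step at a time. I would first reduce the statement to the following local claim: if $a_0(n,t)^2=a(n,t)^2$ and $b_0(n,t)=b(n,t)$ for $t\in(t_0,t_1)$ and all $n$ in a block $\{n_0,\dots,n_0+r\}$, then the same identities hold on the shifted block $\{n_0+1,\dots,n_0+r+1\}$ (and, symmetrically, on $\{n_0-1,\dots,n_0+r-1\}$). Iterating in both directions yields coincidence for every $n$ and every $t\in(t_0,t_1)$; in particular the two solutions share their values at a fixed time, so uniqueness for the (autonomous) evolution $\tl_r=0$ upgrades this to all $(n,t)\in\Z\times\R$. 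Note that, since $g_{r+1}$, $h_{r+1}$, and likewise $\dot a/a=\frac12\,\dot{(a^2)}/a^2$ are expressible through $a^2$ and $b$ alone, it is harmless that only $a^2$ (and not $a$ itself) is controlled.

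The essential tool is the band structure of the operators entering \eqref{tlrabo}. Since $H=aS^++a^-S^-+b$ is tridiagonal, the matrix element $\spr{\delta_m}{H^\ell\delta_n}$ is a sum over lattice paths of length $\ell$ from $n$ to $m$, each contributing the product of the traversed off-diagonal ($a$) and diagonal ($b$) entries. Hence $\ti g_\ell(n)$ and $\ti h_\ell(n)$, and therefore $g_{r+1}(n)$ and $h_{r+1}(n)$, depend only on $\{a(m)^2,b(m)\}$ with $m$ within distance of order $r/2$ of $n$, and the \emph{extremal} variable in each of them is attached to the straight out-and-back path and thus occurs with coefficient equal to a product of the intervening $a(m)^2$, which is nonzero by hypothesis.

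With this in hand I would peel off the two new variables $b(n_0+r+1)$ and $a(n_0+r+1)^2$ one at a time. Evaluating the first component of \eqref{tlrabo} (the $a$-equation) or the second (the $b$-equation) at a site chosen near the middle-right of the block, the known left-hand side $\dot a/a$ resp.\ $\dot b$, together with the already-controlled block entries, leaves a single uncontrolled quantity on the right-hand side whose coefficient is a nonvanishing product of $a(m)^2$; one then solves for it. Carrying this out first for $b(n_0+r+1)$ and then, now that this $b$ is known, for $a(n_0+r+1)^2$ completes the one-step extension.

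The main obstacle is exactly the bookkeeping in the previous paragraph: one must verify that at the chosen evaluation site the reach of $g_{r+1}$ and $h_{r+1}$ is such that precisely one new variable enters, with all remaining arguments lying inside the block or already determined. This is where the parity of $r$ intervenes — for even $r$ the new $b$ is most cleanly isolated from the $a$-equation and the new $a^2$ from the $b$-equation, while for odd $r$ the roles are reversed — and where one needs the precise count of how far $\ti g_\ell$ and $\ti h_\ell$ reach in either direction. Once the correct evaluation sites are identified, the nonvanishing of the leading coefficient (a product of $a(m)^2$) renders each solve-for-the-new-variable step immediate, exactly as in the proof of Theorem~\ref{thmToda}.
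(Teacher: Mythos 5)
Your proposal is correct and follows essentially the same route as the paper: propagate the coincidence window one site at a time by isolating, in the $a$- and $b$-equations evaluated near the middle of the block, the single extremal uncontrolled variable, whose coefficient is a nonvanishing product of the $a(m)^2$, and the parity split you describe (new $b$ from the $a$-equation and new $a^2$ from the $b$-equation for even $r$) is exactly the case distinction the paper makes. The only difference is presentational: you justify the key structural fact about $\ti{g}_\ell$, $\ti{h}_\ell$ via the lattice-path expansion of $\spr{\delta_m}{H^\ell \delta_n}$ for a tridiagonal operator, whereas the paper writes out the explicit leading terms and verifies them from the recursion relations.
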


\begin{proof}
Let us drop the dependence on $t$ for notational simplicity during this proof.
The key observation is the following structure for the homogenous quantities
$\ti{g}_j$, $\ti{h}_j$:
\[
\ti{g}_j(n) = \begin{cases}
\left(\prod\limits_{\ell=0}^{k-1} a(n+\ell)^2 \right) b(n+k) + R(n+k-1,n-k+1) +\\
+ \left(\prod\limits_{\ell=1}^k a(n-\ell)^2 \right) \left( b(n-k) + 2 \sum\limits_{\ell=0}^{k-1} b(n-\ell)\right), \qquad j=2k+1,\\
 \left(\prod\limits_{\ell=0}^{k-2} a(n+\ell)^2 \right) \Big( a(n+k-1)^2 +  b(n+k-1)^2\\ + 2b(n+k-1)\sum\limits_{\ell=0}^{k-2} b(n+\ell) \Big)+ \\
+ R(n+k-2,n-k+1) + \prod\limits_{\ell=1}^k a(n-\ell)^2, \qquad j=2k,
\end{cases}
\]
and
\[
\ti{h}_j(n) = \begin{cases}
2 \left(\prod\limits_{\ell=0}^{k-1} a(n+\ell)^2 \right) \Big( a(n+k)^2 +  b(n+k)^2\\ + 2b(n+k)\sum\limits_{\ell=0}^{k-1} b(n+\ell) \Big)+\\
+ R(n+k-1,n-k+1) + 2\prod\limits_{\ell=0}^k a(n-\ell)^2, \qquad j=2k+1,\\
2\left(\prod\limits_{\ell=0}^{k-1} a(n+\ell)^2 \right) b(n+k) + R(n+k-1,n-k+2)\\
+ 2 \prod\limits_{\ell=0}^{k-1} a(n-\ell)^2 \left( b(n+1) + b(n-k+1) +2 \sum\limits_{\ell=0}^{k-2} b(n-\ell) \right), \qquad j=2k,
\end{cases}
\]
for $j>1$. Here $R(n,m)$ denotes terms which involve only $a(\ell)$ and $b(\ell)$ with $m \le \ell \le n$ and
we set $R(n,m)=0$ if $n<m$.
In fact, this can be verified using $\ti{g}_0=1$, $\ti{h}_0=0$, together with the recursions (\cite[Chap.~6]{tjac})
\begin{align}
\ti{g}_{j+1}  &= \frac{\ti{h}_j + \ti{h}_j^-}{2} + b \ti{g}_j,\\
\ti{h}_{j+1} &= 2a^2 \sum_{\ell=0}^j \ti{g}_{j-\ell} \ti{g}_{\ell}^+ - \frac{1}{2} 
\sum_{\ell=0}^j \ti{h}_{j-\ell} \ti{h}_{\ell}, \quad j\in\N_0.
\end{align}
Now we are ready for the main part of the proof. It suffices to show that \eqref{a0b0eqab} holds for $n=n_0-1$ and $n=n_0+r+1$.

We first look at the case $r+1=2k+1$. Then
\begin{align*}
0 &= \frac{\dot{a}(n_0+k)}{a(n_0+k)} - \frac{\dot{a}_0(n_0+k)}{a_0(n_0+k)}\\
&= g_{r+1}(n_0+k+1) - g_{r+1}(n_0+k) - g_{0,r+1}(n_0+k+1) + g_{0,r+1}(n_0+k)\\
&= \left( \prod_{\ell=k+1}^r a_0(n_0+\ell)^2 \right)\big( b(n_0+r+1) - b_0(n_0+r+1) \big)
\end{align*}
shows that $b(n_0+r+1) = b_0(n_0+r+1)$. Similarly,
\begin{align*}
0 &= \dot{b}(n_0+k) - \dot{b}_0(n_0+k)\\
&= h_{r+1}(n_0+k) - h_{r+1}(n_0+k-1) - h_{0,r+1}(n_0+k) + h_{0,r+1}(n_0+k-1)\\
&= 2 \left( \prod_{\ell=0}^{k-1} a_0(n_0+\ell)^2 \right)\big( a(n_0-1)^2 - a_0(n_0-1)^2 \big)
\end{align*}
shows that $a(n_0-1)^2 = a_0(n_0-1)^2$. Proceeding like this and using the result found in the previous steps,
\begin{align*}
0 &= \dot{b}(n_0+k+1) - \dot{b}_0(n_0+k+1)\\
&= h_{r+1}(n_0+k+1) - h_{r+1}(n_0+k) - h_{0,r+1}(n_0+k+1) + h_{0,r+1}(n_0+k)\\
&= \left(\prod\limits_{\ell=k+1}^r a_0(n_0+\ell)^2 \right) \big( a(n_0+r+1)^2 - a_0(n_0+r+1)^2\big)
\end{align*}
shows that $a(n_0+r+1)^2 = a_0(n_0+r+1)^2$, and
\begin{align*}
0 &= \frac{\dot{a}(n_0+k-1)}{a(n_0+k-1)} - \frac{\dot{a}_0(n_0+k-1)}{a_0(n_0+k-1)}\\
&= g_{r+1}(n_0+k) - g_{r+1}(n_0+k-1) - g_{0,r+1}(n_0+k) + g_{0,r+1}(n_0+k-1)\\
&= \left(\prod\limits_{\ell=-1}^k a_0(n_0+\ell)^2 \right) \big( b(n_0-1) - b_0(n_0-1) \big)
\end{align*}
shows that $b(n_0-1) = b_0(n_0-1)$, which finishes the case $r+1=2k+1$.
The case $r+1=2k$ is analogous.
\end{proof}

Finally, since the Kac--van Moerbeke hierarchy can be obtained by setting $b=0$ in the odd equations of the
Toda hierarchy, $\km_r(a) = \tl_{2r+1}(a,0)$ (see \cite{mt}), this last result also covers the Kac--van Moerbeke hierarchy.
In particular,

\begin{corollary}
Assume that $\rho_0(n,t)\ne 0$ and $\rho(n,t)\ne 0$ are solutions
of the Kac--van Moerbeke equation
\be
\dot{\rho}(n,t) = \rho(n,t) \big(\rho(n+1,t) - \rho(n-1,t) \big)
\ee
such that there is one $n_0 \in \Z$ and two times $t_0<t_1$
such that
\be
\rho_0(n_0,t) = \rho(n_0,t),\quad
\rho_0(n_0+1,t) = \rho(n_0+1,t),
\ee
for $t\in (t_0,t_1)$. Then
\be
\rho_0(n,t) = \rho(n,t)
\ee
for all $(n,t) \in \Z\times\R$.
\end{corollary}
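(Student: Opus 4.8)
The plan is to deduce this directly from Theorem~\ref{thmtlh} via the reduction $\km_r(a)=\tl_{2r+1}(a,0)$ recorded just above, using the positivity hypothesis $\rho>0$ to pass between $\rho$ and the Jacobi data. First I would locate the stated equation inside the Toda hierarchy. Putting $b\equiv 0$ and $\rho=a^2$ into the first nontrivial odd member $\tl_1(a,0)=0$, a short computation with $\ti{g}_2(n)=\spr{\delta_n}{H^2\delta_n}=a(n-1)^2+a(n)^2$ (valid when $b=0$) and the first Toda-hierarchy equation $\dot a=a\,(g_2^+-g_2)$ gives $\dot a(n)=a(n)\big(a(n+1)^2-a(n-1)^2\big)$, that is $\dot\rho(n)=2\rho(n)\big(\rho(n+1)-\rho(n-1)\big)$. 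Hence the Kac--van Moerbeke equation of the corollary is exactly $\tl_1(a,0)=0$ with $\rho=a^2$, up to the harmless rescaling $t\mapsto t/2$, which affects neither the hypothesis (validity on an interval $(t_0,t_1)$) nor the conclusion.

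With this identification the two positive solutions $\rho_0,\rho$ correspond to $a_0^2=\rho_0>0$, $a^2=\rho>0$ together with $b_0=b\equiv 0$, both solving $\tl_1$. The hypothesis $\rho_0(n_0+j)=\rho(n_0+j)$ for $j=0,1$ translates into $a_0(n_0+j)^2=a(n_0+j)^2$ and $b_0(n_0+j)=b(n_0+j)$ for $j=0,\dots,r$ with $r=1$, which is precisely the hypothesis of Theorem~\ref{thmtlh}. That theorem then yields $a_0(n,t)^2=a(n,t)^2$ for all $(n,t)$, i.e.\ $\rho_0=\rho$ everywhere; no separate square-root step is needed, since $\rho$ itself equals $a^2$.

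The only genuine point requiring care is the bookkeeping of the correspondence: confirming that $r=1$ is the correct index (so that agreement at the two sites $n_0,n_0+1$ matches the $r+1=2$ sites demanded by Theorem~\ref{thmtlh}), and that passing from $\rho$ to $(a,b)=(\sqrt{\rho},0)$ genuinely produces a solution of $\tl_1$ rather than a rescaled variant. The positivity $\rho>0$ is exactly what makes $a=\sqrt{\rho}$ well defined and nonvanishing, matching the standing assumption $a^2\ne 0$ of Theorem~\ref{thmtlh}.

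As a cross-check, and to keep the argument self-contained, I would also record the direct proof in the spirit of Theorem~\ref{thmToda}, which avoids the scaling bookkeeping entirely: differencing the equations at $n_0$ gives $0=\rho(n_0)\big(\rho_0(n_0-1)-\rho(n_0-1)\big)$, and at $n_0+1$ gives $0=\rho(n_0+1)\big(\rho(n_0+2)-\rho_0(n_0+2)\big)$. Dividing by the nonzero factors $\rho(n_0),\rho(n_0+1)$ propagates the agreement to $n_0-1$ and $n_0+2$, and iterating this one-step propagation in both directions yields agreement on all of $\Z$.
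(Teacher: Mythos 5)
Your proposal is correct and follows the paper's own route: the corollary is stated there as an immediate consequence of the reduction $\km_r(a)=\tl_{2r+1}(a,0)$ combined with Theorem~\ref{thmtlh}, with the two-site hypothesis at $n_0,n_0+1$ matching the $r+1=2$ sites required for $\tl_1$, exactly as you argue (and the positivity of $\rho$ supplying the nonvanishing assumption). The supplementary direct differencing argument you append is also valid and is in the spirit of the proof of Theorem~\ref{thmToda}, though the paper itself gives no such explicit computation for this corollary.
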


\section{The Ablowitz--Ladik hierarchy}

In this section we show that our main result extends to the Ablowitz--Ladik (AL) hierarchy \cite{ghmt}.
We first state the result for the simplest case, whose proof follows
as the one of Theorem~\ref{thmToda}.

\begin{theorem}
Let $C_{0,\pm}, c_1\in\C\backslash \{0\}$.
Assume that $\alpha_0(n,t), \beta_0(n,t)$, with $\rho_0(n,t) \ne 0$, and
$\alpha(n,t), \beta(n,t)$ are solutions
of the Ablowitz--Ladik equation
\begin{align}\nn
\I \dot{\alpha}(n,t) &= -\rho(n,t)^2\big(c_{0,-} \alpha(n-1,t)+ c_{0,+} \alpha(n+1,t)\big) - c_1 \alpha(n,t),\\
\I \dot{\beta}(n,t) &= \rho(n,t)^2\big(c_{0,+} \beta(n-1,t)+ c_{0,-}\beta(n+1,t)\big) + c_1 \beta(n,t),
\end{align}
where
\be
\rho(n,t) = (1-\alpha(n,t)\beta(n,t))^{1/2},
\ee
such that there is one $n_0 \in \Z$ and two times $t_0<t_1$
such that
\be
\alpha_0(n_0+j,t) = \alpha(n_0+j,t),\quad
\beta_0(n_0+j,t) = \beta(n_0+j,t), \qquad j=0,1,
\ee
for $t\in (t_0,t_1)$. Then
\be
\alpha_0(n,t) = \alpha(n,t), \qquad \beta_0(n,t) = \beta(n,t)
\ee
for all $(n,t) \in \Z\times\R$.
\end{theorem}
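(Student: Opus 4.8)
The plan is to mimic the propagation argument of Theorem~\ref{thmToda}: show that coincidence on the two consecutive sites $n_0,n_0+1$ forces coincidence on $n_0-1$ and on $n_0+2$, and then induct to cover all of $\Z$. Since the hypotheses hold on the open interval $(t_0,t_1)$, I may freely differentiate them in $t$, so that in addition to $\alpha_0=\alpha$ and $\beta_0=\beta$ at $n_0,n_0+1$ I also have $\dot\alpha_0=\dot\alpha$ and $\dot\beta_0=\dot\beta$ at these sites. The key structural observation -- the analogue of the fact that $a^2$, $b$ are ``diagonal'' in the Toda case -- is that $\rho(n,t)^2=1-\alpha(n,t)\beta(n,t)$ depends only on the data at site $n$. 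Hence coincidence at a single site $m$ already yields $\rho(m,t)^2=\rho_0(m,t)^2$, and this common value is nonzero because $\rho_0(m,t)\ne0$ by hypothesis.

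To reach $n_0-1$, I would subtract the two copies of the first Ablowitz--Ladik equation evaluated at $n=n_0$. Using $\rho(n_0)^2=\rho_0(n_0)^2$, $\alpha(n_0)=\alpha_0(n_0)$, $\alpha(n_0+1)=\alpha_0(n_0+1)$ and $\dot\alpha(n_0)=\dot\alpha_0(n_0)$, every term cancels except the one involving $\alpha(n_0-1)-\alpha_0(n_0-1)$, leaving
\[
0=-\rho(n_0)^2\, c_{0,-}\bigl(\alpha(n_0-1)-\alpha_0(n_0-1)\bigr).
\]
Since $\rho(n_0)^2\ne0$ and $c_{0,-}\ne0$, this gives $\alpha(n_0-1)=\alpha_0(n_0-1)$; the second equation at $n_0$ gives $\beta(n_0-1)=\beta_0(n_0-1)$ in the same way, now with surviving coefficient $c_{0,+}\ne0$. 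Evaluating instead at $n=n_0+1$ and solving for the site-$(n_0+2)$ terms yields $\alpha(n_0+2)=\alpha_0(n_0+2)$ and $\beta(n_0+2)=\beta_0(n_0+2)$, with surviving coefficients $c_{0,+}$ and $c_{0,-}$ respectively.

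This is precisely the inductive step: whenever the two solutions coincide on two consecutive sites $\{m,m+1\}$, the equations at $m$ produce coincidence at $m-1$ and the equations at $m+1$ produce coincidence at $m+2$. Iterating in both directions establishes $\alpha_0(n,t)=\alpha(n,t)$ and $\beta_0(n,t)=\beta(n,t)$ for all $n\in\Z$ and $t\in(t_0,t_1)$. Fixing any $t^\ast\in(t_0,t_1)$, the two solutions then share the same Cauchy data at $t^\ast$, so uniqueness for the Ablowitz--Ladik flow propagates the coincidence to all $t\in\R$.

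I do not expect a serious obstacle; the scheme closes on two observations worth flagging explicitly. First, one genuinely needs coincidence on two neighboring sites rather than one, because each scalar equation at site $m$ relates $\alpha(m-1)$ and $\alpha(m+1)$ (respectively the $\beta$'s) linearly, so one must already control one neighbor in order to solve for the other -- this is exactly why the hypothesis ranges over $j=0,1$. Second, the invertibility at every step rests entirely on $\rho(m)^2\ne0$ together with $c_{0,\pm}\ne0$; the former is available only because the nonvanishing assumption $\rho_0\ne0$ is transported to $\rho$ through the coincidence, in complete parallel with the indispensable role of the condition $a\ne0$ in Theorem~\ref{thmToda}.
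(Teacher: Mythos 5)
Your proposal is correct and is exactly the argument the paper intends: the paper gives no separate proof for this theorem, stating only that it ``follows as the one of Theorem~\ref{thmToda},'' and your propagation scheme (coincidence at a site transports $\rho^2$ there, then the equations at $n_0$ and $n_0+1$ force coincidence at $n_0-1$ and $n_0+2$, with invertibility from $\rho\ne0$ and $c_{0,\pm}\ne0$) is the faithful Ablowitz--Ladik analogue of that proof. Your explicit remarks on why two neighboring sites are needed and on the final extension to all $t\in\R$ via uniqueness of the Cauchy problem are correct and, if anything, slightly more careful than the paper.
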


The special choices $c_{0,\pm}=1$, $c_1=-2$, and $\beta= \pm \ol{\alpha}$ yield the focusing, defocusing
discrete nonlinear Schr\"odinger equations, respectively. The alternative choice $c_{0,\pm}=\pm\I$, $c_1=0$,
and $\beta=\ol{\alpha}$ yield the Schur flow.

We next turn to the AL hierarchy following \cite{ghmtalh}, \cite{ghmt}.
Associated with two sequences $\alpha(t), \beta(t)$ is a CMV operator
\begin{align}\nn
L(t) &= \rho^-(t) \rho(t) \deven S^{--} + (\beta^-(t)\rho(t) \deven - \alpha^+(t)\rho(t) \dodd) S^- 
- \beta(t)\alpha^+(t) \\
& \quad + (\beta(t) \rho^+(t) \deven - \alpha^{++}(t) \rho^+(t) \dodd) S^+ 
+ \rho^+(t) \rho^{++}(t) \dodd S^{++}, 
\end{align}
acting on sequences over $\Z$, where $\deven$ and $\dodd$ denote the characteristic functions of the even, odd integers,
\begin{equation}
\deven = \chi_{_{2\Z}}, \quad \dodd = 1 - \deven = \chi_{_{2\Z +1}},
\end{equation}
respectively. Next, consider
\begin{align} \nn
P_{\ul p}(t) =& \frac{\I}{2} \sum_{\ell=1}^{p_+} c_{p_+ -\ell,+} \big( [L^\ell(t)]_+ - [L^\ell(t)]_- \big)\\
& - \frac{\I}{2} \sum_{\ell=1}^{p_-} c_{p_- -\ell,-} \big( [L^{-\ell}(t)]_+ - [L^{-\ell}(t)]_- \big)  - \frac{\I}{2} c_{\ul p} \, Q_d,
\qquad \ul p\in\N_0^2,
\end{align} 
with $Q_d$ denoting the doubly infinite diagonal matrix
\begin{equation} 
Q_d=\big((-1)^k \delta_{k,\ell} \big)_{k,\ell \in\Z}.
\end{equation}
Then the AL hierarchy is equivalent to the Lax equation
\begin{equation}
\frac{d}{dt} L(t) -[P_{\ul p}(t), L(t)]=0, \qquad t\in\R.
\end{equation}
To find an explicit expression we introduce
\begin{align}\nn
f_{\ell, \pm}(t) &= \sum_{k=0}^\ell c_{\ell-k, \pm} \hat f_{k, \pm}(t), \quad
g_{\ell, \pm}(t) = \sum_{k=0}^\ell c_{\ell-k, \pm} \hat g_{k, \pm}(t), \\
h_{\ell, \pm}(t) &= \sum_{k=0}^\ell c_{\ell-k, \pm} \hat h_{k, \pm}(t),
\end{align}
where
\begin{align}
\hat f_{\ell,+}(n,t) &=  \alpha(n,t) \spr{\delta_n}{L^{ \ell+1} \delta_n}  + 
\rho(n,t)\begin{cases}  
\spr{\delta_{n-1}}{L^{ \ell+1}(t) \delta_n} , & n \text{ even,}\\
\spr{\delta_n}{L^{ \ell+1}(t) \delta_{n-1}},  & n \text{ odd,} \end{cases} \nn \\
\hat f_{\ell,-}(n,t) &= \alpha(n,t) (\delta_n,L^{- \ell} \delta_n)  + 
 \rho(n,t)\begin{cases}  
\spr{\delta_{n-1}}{L^{- \ell}(t) \delta_n},  & n \text{ even,}\\
\spr{\delta_n}{L^{- \ell}(t) \delta_{n-1}},  & n \text{ odd,} \end{cases}  \nn \\
\hat g_{0,\pm} &= 1/2, \quad 
\hat g_{\ell,\pm}(n,t) = \spr{\delta_n}{L^{\pm \ell}(t) \delta_n},  \\
\hat h_{\ell,+}(n,t) &= \beta(n,t) \spr{\delta_n}{L^{ \ell}(t) \delta_n}  + 
\rho(n,t)\begin{cases}  
\spr{\delta_n}{L^{ \ell}(t) \delta_{n-1}},  & n \text{ even,}\\
\spr{\delta_{n-1}}{L^{ \ell}(t) \delta_n} ,  & n \text{ odd,} \end{cases}\nn  \\
\hat h_{\ell,-}(n,t) &= \beta(n,t) \spr{\delta_n}{L^{- \ell-1} \delta_n}  + 
 \rho(n,t)\begin{cases}  
\spr{\delta_n}{L^{- \ell-1}(t) \delta_{n-1}} ,  & n \text{ even,}\\
\spr{\delta_{n-1}}{L^{- \ell-1}(t) \delta_n} ,  & n \text{ odd.} \end{cases} \nn
\end{align}
Then the $\ul p$th equation, $\ul p=(p_-,p_+)\in\N_0^2$, in the AL
hierarchy is given by
\begin{align}   
\begin{split}
& \AL_{\ul p} (\alpha, \beta) = \begin{pmatrix}-\I \dot{\alpha}(t)
- \alpha(g_{p_+,+}(t) + g_{p_-,-}^-(t)) + f_{p_+ -1,+}(t) - f_{p_- -1,-}^-(t)\\
-\I\dot{\beta}(t)+ \beta(g_{p_+,+}^-(t) + g_{p_-,-}(t)) 
- h_{p_- -1,-}(t) + h_{p_+ -1,+}^-(t) \end{pmatrix}=0,  \\
& \hspace*{7cm} \ul p=(p_-,p_+)\in\N_0^2.
\end{split}
\end{align}

\begin{theorem}
Fix some $\ul{p}=(p_+,p_-)\in \N_0^2$ such that $p_-=p_+>0$ and set $p=p_++p_--1$.
Assume that $\alpha_0(n,t), \beta_0(n,t)$, with $\rho_0(n,t) \ne 0$, and $\alpha(n,t), \beta(n,t)$
are solutions of some equation in the Toda hierarchy $\AL_{\ul p}$ such that there is one
$n_0 \in \Z$ and two times $t_0<t_1$
such that
\be
\alpha_0(n_0+j,t) = \alpha(n_0+j,t),\quad
\beta_0(n_0+j,t) = \beta(n_0+j,t), \qquad 0 \le j\le p,
\ee
for $t\in (t_0,t_1)$. Then
\be
\alpha_0(n,t) = \alpha(n,t),\quad \beta_0(n,t) = \beta(n,t)
\ee
for all $(n,t) \in \Z\times\R$.
\end{theorem}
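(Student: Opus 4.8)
The plan is to mirror the proof of Theorem~\ref{thmtlh}: use the band structure of the CMV operator $L$ to transport the matching hypothesis, one site at a time, from the block $\{n_0,\dots,n_0+p\}$ onto the two neighbouring sites $n_0-1$ and $n_0+p+1$, and then induct over $\Z$. The operator $L$ is five-diagonal, so a power $L^\ell$ connects $\delta_n$ to $\delta_m$ only for $|m-n|\le 2\ell$; moreover $L^{-1}$ is again five-diagonal (\cite{ghmt}), so the same holds for the negative powers $L^{-\ell}$. Hence each homogeneous quantity $\hat f_{\ell,\pm}(n)$, $\hat g_{\ell,\pm}(n)$, $\hat h_{\ell,\pm}(n)$ is a polynomial in the $\alpha,\beta,\rho$ at sites in a finite band around $n$, the $+$ quantities reaching to the right and the $-$ quantities to the left, exactly as the $\ti g_j,\ti h_j$ in the Toda case reach symmetrically around $n$.

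The first and main step is to establish the analogue of the explicit leading-order formulas used in the proof of Theorem~\ref{thmtlh}. Using the CMV recursions of \cite{ghmt} together with $\hat g_{0,\pm}=1/2$, I would show by induction on $\ell$ that the rightmost site occurring in $\hat g_{p_+,+}(n)$, $\hat f_{p_+-1,+}(n)$, $\hat h_{p_+-1,+}(n)$ appears with a coefficient that is a nonempty product of factors $\rho(\cdot)^2\neq 0$ (possibly times one factor $\alpha$ or $\beta$), and symmetrically that the leftmost site occurring in $\hat g_{p_-,-}(n)$, $\hat f_{p_--1,-}(n)$, $\hat h_{p_--1,-}(n)$ appears with a nonzero $\rho^2$-product coefficient. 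The hypothesis $p_+=p_->0$ is what makes these extremal terms genuinely present and the left/right reaches balanced, so that the total width of $\AL_{\ul p}$ at a single site equals $p+2=p_++p_-+1$, exactly one more than the block length $p+1=p_++p_-$.

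Granting this structure, the remaining step is routine. Differentiating $\AL_{\ul p}(\alpha,\beta)=0$ and $\AL_{\ul p}(\alpha_0,\beta_0)=0$ in $t$ on $(t_0,t_1)$ and subtracting, all contributions supported inside the block cancel by hypothesis. Evaluating the two components (the $\dot\alpha$- and the $\dot\beta$-equation) at the unique site whose rightmost reach is $n_0+p+1$ leaves a $2\times 2$ linear system for $\alpha(n_0+p+1)-\alpha_0(n_0+p+1)$ and $\beta(n_0+p+1)-\beta_0(n_0+p+1)$, with coefficients built from the nonzero $\rho^2$-products of the first step; solving it gives matching at $n_0+p+1$, and evaluating the same two equations at the site whose leftmost reach is $n_0-1$ gives matching at $n_0-1$. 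Induction on $n$ then yields the conclusion on all of $\Z\times\R$.

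The main obstacle is the first step. In contrast to the Jacobi case, $L$ is five-diagonal with an even/odd alternation and the flow involves both $L^{\ell}$ and $L^{-\ell}$, so $L^{-1}$, though again five-diagonal, has more intricate entries than $L$ and must be tracked on the same footing. Carrying the two extremal coefficients through the recursion, and in particular verifying that the $2\times 2$ coefficient matrix coupling the new pair $(\alpha,\beta)$ at an edge site is invertible --- so that both fields, which are independent when $\beta\neq\pm\ol\alpha$, can be recovered simultaneously --- is the delicate bookkeeping that here replaces the four short computations of the Toda hierarchy proof.
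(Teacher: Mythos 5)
Your proposal follows essentially the same route as the paper: derive from the CMV recursions the explicit extremal terms of $\hat f_{\ell,\pm}$, $\hat g_{\ell,\pm}$, $\hat h_{\ell,\pm}$ as nonvanishing products of $\rho^2$ factors times a single field at the outermost site, then evaluate the differentiated $\AL_{\ul p}$ equations at the sites whose reach just exits the block to propagate the identity to $n_0-1$ and $n_0+p+1$. The only deviation is your anticipated $2\times 2$ coupling at the edge site, which the paper's explicit formulas show is actually diagonal --- the $\dot\alpha$-equation's extremal term involves only $\alpha$ at the new site and the $\dot\beta$-equation's only $\beta$ --- so that step reduces to the same four decoupled computations as in the Toda case.
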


\begin{proof}
Again we drop the dependence on $t$ for notational simplicity during this proof and
use the same conventions as in the proof of Theorem~\ref{thmtlh}.

The homogeneous quantities $\hat f_{\ell,\pm}$, $\hat g_{\ell,\pm}$, $\hat h_{\ell,\pm}$
are uniquely defined by the following recursion relations \cite[Lem.~C.5]{ghmt}:
\begin{align*}
\hat g_{0,+} &= \frac{1}{2}, \quad \hat f_{0,+} = -\alpha^+, \quad \hat h_{0,+} = \beta, \\
\hat g_{l+1,+} &= \sum_{k=0}^l \hat f_{l-k,+} \hat h_{k,+}
- \sum_{k=1}^l \hat g_{l+1-k,+} \hat g_{k,+}, \\
\hat f_{l+1,+}^- &= \hat f_{l,+} - \alpha (\hat g_{l+1,+} + \hat g_{l+1,+}^-),\\
\hat h_{l+1,+} &= \hat h_{l,+}^- + \beta (\hat g_{l+1,+} + \hat g_{l+1,+}^-), 
\end{align*}
and
\begin{align*}
\hat g_{0,-} &= \frac{1}{2}, \quad \hat f_{0,-} = \alpha, \quad \hat h_{0,-} = -\beta^+, \\
\hat g_{l+1,-} &= \sum_{k=0}^l \hat f_{l-k,-} \hat h_{k,-}
- \sum_{k=1}^l \hat g_{l+1-k,-} \hat g_{k,-}, \\
\hat f_{l+1,-} &= \hat f_{l,-}^- + \alpha (\hat g_{l+1,-} + \hat g_{l+1,-}^-),\\
\hat h_{l+1,-}^- &= \hat h_{l,-} - \beta (\hat g_{l+1,-} + \hat g_{l+1,-}^-). 
\end{align*}
From them we obtain
\begin{align}\nn
\hat f_{j,+}(n) =& -\left(\prod_{l=1}^j \rho(n+l)^2\right) \alpha(n+j+1) + R(n+j,n-j+2)\\
& +  \left(\prod_{l=0}^{j-2} \rho(n-l)^2\right) \alpha(n+1)^2 \beta(n-j+1),\\ \nn
\hat f_{j,-}(n) =& -\left(\prod_{l=1}^{j-1} \rho(n+l)^2\right) \alpha(n)^2 \beta(n+j) + R(n+j-1,n-j+1)\\
& + \left(\prod_{l=0}^{j-1} \rho(n-l)^2\right) \alpha(n-j),
\end{align}
\begin{align}\nn
\hat g_{j,+}(n) =& -\left(\prod_{l=1}^{j-1} \rho(n+l)^2\right) \beta(n)\alpha(n+j) + R(n+j-1,n-j+2)\\
& - \left(\prod_{l=0}^{j-2} \rho(n-l)^2\right) \alpha(n+1) \beta(n-j+1),\\ \nn
\hat g_{j,-}(n) =& -\left(\prod_{l=1}^{j-1} \rho(n+l)^2\right) \alpha(n)\beta(n+j) + R(n+j-1,n-j+2)\\
& - \left(\prod_{l=0}^{j-2} \rho(n-l)^2\right) \beta(n+1) \alpha(n-j+1),
\end{align}
\begin{align}\nn
\hat h_{j,+}(n) =& -\left(\prod_{l=1}^{j-1} \rho(n+l)^2\right) \beta(n)^2 \alpha(n+j) + R(n+j-1,n-j+1)\\
& + \left(\prod_{l=0}^{j-1} \rho(n-l)^2\right) \beta(n-j),\\ \nn
\hat h_{j,-}(n) =& -\left(\prod_{l=1}^j \rho(n+l)^2\right) \beta(n+j+1) + R(n+j,n-j+2)\\
& +  \left(\prod_{l=0}^{j-2} \rho(n-l)^2\right) \beta(n+1)^2 \alpha(n-j+1)
\end{align}
for $j\in\N$.
Note that it suffices to verify the $+$ case since the $-$ case follows from $\hat f_{j,\pm}(\alpha,\beta)= 
\hat h_{j,\pm}(\alpha,\beta)$ and $\hat g_{j,+}(\alpha,\beta)=  \hat g_{j,\-}(\alpha,\beta)$ (\cite[Lem.~3.7]{ghmt}).

Now we can proceed as in the case of the Toda hierarchy. For example,
\[
0= \I \big( \dot{\alpha}(n+p_-) - \dot{\alpha}_0(n+p_-) \big)= - c_{0,+} \left(\prod_{l=p_-}^{p} \rho(n-l)^2\right)
\big( \alpha(n+p+1) -  \alpha_0(n+p+1)\big)
\]
implies that $\alpha(n+p+1) = \alpha_0(n+p+1)$, etc.
\end{proof}

Interestingly, the above approach does not seem to work for $p_-\ne p_+$ in general. In any case,
the above result covers the discrete nonlinear Schr\"odinger and Schur hierarchies via the
above-mentioned special choices $c_{0,\pm}=1$, $\beta= \pm \ol{\alpha}$ and
$c_{0,\pm}=\pm\I$, $\beta=\ol{\alpha}$.

{\bf Acknowledgments.}
We thank F.\ Gesztesy and the anonymous referee for pointing out errors in a previous version of this article.

\end{document}